\newif\ifproof
  \renewcommand\footnotetextcopyrightpermission[1]{} 
\newcommand{\R}{\mathbb{R}}
\newtheorem{assum}{Assumption}
\newtheorem{problem}[assum]{Problem}
\newtheorem{proposition}[assum]{Proposition}
\begin{document}

\copyrightyear{2019}
\acmYear{2019}
\setcopyright{acmlicensed}
\acmConference[HSCC '19]{22nd ACM International Conference on Hybrid Systems: Computation and Control}{April 16--18, 2019}{Montreal, QC, Canada}
\acmBooktitle{22nd ACM International Conference on Hybrid Systems: Computation and Control (HSCC '19), April 16--18, 2019, Montreal, QC, Canada}
\acmPrice{15.00}
\acmDOI{10.1145/3302504.3311808}
\acmISBN{978-1-4503-6282-5/19/04}

\title{TIRA: Toolbox for Interval Reachability Analysis}

\titlenote{Funded in part by the National Science Foundation under Grant CNS-1446145.}

\author{Pierre-Jean Meyer}
\affiliation{%
  \institution{University of California, Berkeley}
  \city{Berkeley}
  \state{CA, USA}
}
\email{pjmeyer@berkeley.edu}

\author{Alex Devonport}
\affiliation{%
  \institution{University of California, Berkeley}
  \city{Berkeley}
  \state{CA, USA}
}
\email{alex_devonport@berkeley.edu}

\author{Murat Arcak}
\affiliation{%
  \institution{University of California, Berkeley}
  \city{Berkeley}
  \state{CA, USA}
}
\email{arcak@berkeley.edu}

\begin{abstract}

This paper presents TIRA, a Matlab library gathering several methods for the computation of interval over-approximations of the reachable sets for both continuous- and discrete-time nonlinear systems.
Unlike other existing tools, the main strength of interval-based reachability analysis is its simplicity and scalability, rather than the accuracy of the over-approximations.
The current implementation of TIRA contains four reachability methods covering wide classes of nonlinear systems, handled with recent results relying on contraction/growth bounds and monotonicity concepts.
TIRA's architecture features a central function working as a hub between the user-defined reachability problem and the library of available reachability methods.
This design choice offers increased extensibility of the library, where users can define their own method in a separate function and add the function call in the hub function.
\end{abstract}

%
%
%

\keywords{Reachability analysis, nonlinear systems, monotonicity, mixed-monotonicity, contraction, growth bound, sensitivity.}

\maketitle

\section{Introduction}
\label{sec intro}
Reachability analysis aims to compute the set of successor states that can be reached by a system given sets of initial states and admissible inputs.
Since an exact computation of a reachable set is not possible for most systems, we rely on methods to over-approximate this set.
Various tools and set representations for these over-approximations have been proposed in the literature, such as zonotopes in CORA~\cite{althoff2015introduction}, support functions in SpaceEx~\cite{frehse2011spaceex}, ellipsoids in the Ellipsoidal Toolbox~\cite{kurzhanskiy2006ellipsoidal}, Taylor models in Flow$^*$~\cite{chen2013flow}, polytopes in Sapo~\cite{dreossi2017sapo} or interval pavings~\cite{jaulin2001applied}.
Other tools such as the Level Set Toolbox~\cite{mitchell2005toolbox} are instead designed to tackle backward reachability problems.

The main common point of the above reachability methods is that their primary focus is to compute a set that over-approximates the actual reachable set as tightly as possible.
While such approaches are particularly interesting to minimize the conservativeness of the over-approximation in simple verification objectives (e.g.\ with safety or reachability specifications), the inherent complexity of the set representations allowing for such tight approximations can make these sets impractical to use if further manipulations are required (e.g.\ saving in memory, intersection with another set).

On the other hand, reachability analysis plays a central role in the field of abstraction-based control synthesis (see e.g.~\cite{moor2002abstraction,coogan2015efficient,reissig2016,meyer2017hierarchical}), where a reachable set over-approximation needs to be computed for each cell of a state space partition and each input value (i.e.\ exponential complexity in the state and input dimensions), and the abstraction is obtained by intersecting these sets with the partition elements.
In addition, existing abstraction tools are limited by their internal reachability algorithms: e.g.\ SCOTS~\cite{scots} relies on the hard-coded growth bound method; PESSOA~\cite{pessoa} cannot handle nonlinear systems unless the user provides their own over-approximation function.
This motivated recent work~\cite{coogan2015efficient,reissig2016,meyer2018sampled,meyer2017hierarchical} on reachability analysis based on the simpler set representation of \emph{multi-dimensional intervals} (also known as \emph{axis-aligned boxes} or \emph{hyper-rectangles}).
While intervals usually result in more conservative over-approximations of the reachable sets, they have useful advantages for the implementation of abstraction-based algorithms: they are fully defined with only two state vectors; their intersection is still an interval; the associated over-approximation methods have very good scalability with a complexity (number of successor computations) at best constant~\cite{reissig2016,moor2002abstraction,meyer2017hierarchical,coogan2015efficient} and at worst linear in the state dimension~\cite{meyer2018sampled}.
Therefore, compared to existing reachability analysis tools, the interval-based methods trade off the accuracy of the over-approximating sets for the simplicity and scalability of the reachability analysis, while still resulting in the tightest possible interval over-approximation for some of these methods~\cite{moor2002abstraction,coogan2015efficient,meyer2018sampled}.

In this paper, we introduce TIRA\,\footnote{\url{https://gitlab.com/pj_meyer/TIRA}}
(Toolbox for Interval Reachability Analysis), a Matlab library gathering several methods to compute interval over-approximations of reachable sets for both continuous- and discrete-time systems.
The primary motivation for the introduction of this tool library is to make publicly available some of the more recent results on interval reachability analysis~\cite{coogan2015efficient,reissig2016,meyer2018sampled,meyer2017hierarchical} and allow external users an easy access to these methods without requiring them to know the theoretical or implementation details.
The architecture of the toolbox features a central function working as a hub between the user-defined reachability problem and the library of available reachability methods.
It takes the initial state and input intervals and returns the over-approximation interval, applying either the method requested by the user, or otherwise picking the most suitable one based on the system properties.
The motivation for this architecture is to offer an easily extensible library, where users can define their own method in a separate function and then add its call in the hub function.

TIRA currently contains four over-approximation methods covering very wide classes of systems: any system with known Jacobian bounds; and any continuous-time system with constant input functions over the time range of the reachability analysis.
The three methods for continuous-time systems are based on contraction/growth bounds~\cite{kapela2009lohner,reissig2016}, mixed-monotonicity~\cite{meyer2017hierarchical}, and sampled-data mixed-monotonicity~\cite{meyer2018sampled}.
The unique method for discrete-time systems is based on mixed-monotonicity~\cite{meyer2018sampled}.

The paper is organized as follows.
Section~\ref{sec problem} introduces notations and formulates the considered reachability problems.
Section~\ref{sec reachability} gives an overview of the implemented over-approximation methods alongside their main limitations and the relevant literature.
The toolbox architecture is summarized in Section~\ref{sec toolbox}.
Finally, Section~\ref{sec examples} demonstrates the use of TIRA on numerical examples.

\section{Problem formulation}
\label{sec problem}
Let $\R$ and $\R^n$ be the sets of real numbers and $n$-dimensional real vectors, respectively.
$1_n$ and $0_n$ are $n$-dimensional vectors filled with ones and zeros, respectively.
$I_n$ is the $n\times n$ identity matrix.
Given $a,b\in\R^n$, $[a,b]\subseteq\R^n$ denotes the $n$-dimensional interval $\{x\in\R^n~|~a\leq x\leq b\}$, using componentwise inequalities.
Given a set $X\subseteq\R^n$, interval $[a,b]\subseteq\R^n$ is said to be a \emph{tight} interval over-approximation of $X$ if $X\subseteq[a,b]$ and for any strictly included interval $[c,d]\varsubsetneq[a,b]$, we have $X\nsubseteq[c,d]$.

We consider both continuous-time and discrete-time systems with time-varying vector field
\begin{gather}
\dot x = f(t,x,p),\label{eq system ct}\\
x^+ = F(t,x,p),\label{eq system dt}
\end{gather}
with time $t\in\R$, state $x\in\R^{n_x}$ and input $p\in\R^{n_p}$.
For the continuous-time system (\ref{eq system ct}), $\Phi(t;t_0,x_0,\mathbf{p})$ denotes the state (assumed to exist and be unique) reached at time $t\geq t_0$ by system (\ref{eq system ct}) starting from initial state $x_0\in\R^{n_x}$ at time $t_0\in\R$ and under piecewise continuous input function $\mathbf{p}:[t_0,+\infty)\rightarrow\R^{n_p}$.
For a constant input function $\mathbf{p}\equiv p\in\R^{n_p}$ over the time range $[t_0,t]$, we write $\Phi(t;t_0,x_0,p)$.
$\Phi$ is evaluated through Runge-Kutta methods and the associated errors are currently neglected in TIRA.

\begin{problem}[Continuous-time reachability]
\label{pb ct}
Given time range $[t_0,t_f]\subseteq\R$, interval of initial states $[\underline{x},\overline{x}]\subseteq\R^{n_x}$ and interval of input values $[\underline{p},\overline{p}]\subseteq\R^{n_p}$, find an interval in $\R^{n_x}$ over-approximating the reachable set of (\ref{eq system ct}) defined as:
\begin{multline*}
R(t_f;t_0,[\underline{x},\overline{x}],[\underline{p},\overline{p}])=\\
\{\Phi(t_f;t_0,x_0,\mathbf{p})~|~x_0\in[\underline{x},\overline{x}],\mathbf{p}:[t_0,t_f]\rightarrow[\underline{p},\overline{p}]\}.
\end{multline*}
\end{problem}

\begin{problem}[Discrete-time reachability]
\label{pb dt}
Given initial time $t_0\in\R$, interval of initial states $[\underline{x},\overline{x}]\subseteq\R^{n_x}$ and interval of input values $[\underline{p},\overline{p}]\subseteq\R^{n_p}$, find an interval in $\R^{n_x}$ over-approximating the reachable set of (\ref{eq system dt}) defined as:
$$R(t_0,[\underline{x},\overline{x}],[\underline{p},\overline{p}])=\{F(t_0,x_0,p)~|~x_0\in[\underline{x},\overline{x}],p\in[\underline{p},\overline{p}]\}.$$
\end{problem}

All over-approximation methods summarized in the next section rely on the Jacobian (assuming a continuously differentiable vector field) and sensitivity matrices of systems (\ref{eq system ct}) and (\ref{eq system dt}).
The state and input Jacobian matrices of (\ref{eq system ct}) are given by the partial derivatives $J_x(t,x,p)=\frac{\partial f(t,x,p)}{\partial x}$ and $J_p(t,x,p)=\frac{\partial f(t,x,p)}{\partial p}$, respectively.
The Jacobian matrices of (\ref{eq system dt}) are similarly obtained by replacing $f$ by $F$.
For continuous-time systems (\ref{eq system ct}) with constant input functions on $[t_0,t_f]$, we further define the sensitivity of the trajectories $\Phi$ to variations of the initial state $S_x(t_f;t_0,x_0,p)=\frac{\partial \Phi(t_f;t_0,x_0,p)}{\partial x_0}$ and to variations of the input value $S_p(t_f;t_0,x_0,p)=\frac{\partial \Phi(t_f;t_0,x_0,p)}{\partial p}$.

\section{Reachability methods}
\label{sec reachability}
In this section, we give an overview of the four methods currently implemented in TIRA for the over-approximation of the reachable set of system (\ref{eq system ct}) or (\ref{eq system dt}) by an interval.
For more in-depth descriptions and proofs, the reader is referred to the papers mentioned in each of the subsections below.

\subsection{Contraction/growth bound}
\label{sub growth bound}
This method holds various names in the literature and can be seen as a particular case of the results in~\cite{kapela2009lohner} based on logarithmic norms, an extension to time-varying systems of the growth bound approach in~\cite{reissig2016}, or an extension to systems with inputs of the componentwise contraction results in~\cite{ArcMai18EALfest}.
Let $x^*=\frac{\underline{x}+\overline{x}}{2}\in\R^{n_x}$ and $[x]=\frac{\overline{x}-\underline{x}}{2}\in\R^{n_x}$ be the center and half-width of the initial state interval $[\underline{x},\overline{x}]$, respectively.
Similarly define $p^*$ and $[p]$ for $[\underline{p},\overline{p}]$.

\paragraph{Requirements and limitations}
The main result of this approach presented below is limited to continuous-time systems (\ref{eq system ct}) with additive input, i.e.\ $n_p=n_x$ and for all $t\in\R$, $x\in\R^{n_x}$, $p\in\R^{n_p}$:
\begin{equation}
\label{eq system ct additive input}
f(t,x,p)=f(t,x,0_{n_p})+p.
\end{equation}
In addition, we assume that we are provided a componentwise contraction/growth matrix defined as follows.
\begin{assum}
\label{assum growth bound}
Given an invariant state space $X\subseteq\R^{n_x}$, there exists $C\in\R^{n_x\times n_x}$ such that for all $t\in[t_0,t_f]$, $x\in X$ and $i,j\in\{1,\dots,n_x\}$ with $j\neq i$  we have:
\begin{equation*}
\begin{cases}
C_{ii}\geq {J_x}_{ii}(t,x,p^*),\\
C_{ij}\geq |{J_x}_{ij}(t,x,p^*)|.
\end{cases}
\end{equation*}
\end{assum}


\paragraph{Method description}
We first define a growth bound function $G:\R_{\geq0}\times\R_{\geq0}^{n_x}\times\R_{\geq0}^{n_x}\rightarrow\R_{\geq0}^{n_x}$: 
\begin{equation}
\label{eq growth bound}
G(\tau,x,p) = e^{C\tau}x+\int_{0}^{\tau}e^{Ct}pdt.
\end{equation}
Then, an interval over-approximation of the reachable set of (\ref{eq system ct additive input}) is obtained by adding and subtracting $G(t_f-t_0,[x],[p])$ to the successor of (\ref{eq system ct additive input}) from the pair $(x^*,p^*)$ of the interval centers.
\begin{proposition}
\label{prop growth bound}
Under Assumption~\ref{assum growth bound} and definition (\ref{eq growth bound}), an over-approximation of the reachable set of (\ref{eq system ct additive input}) in Problem~\ref{pb ct} is given by:
\begin{multline*}
R(t_f;t_0,[\underline{x},\overline{x}],[\underline{p},\overline{p}])\subseteq\\
[\Phi(t_f;t_0,x^*,p^*)-G(t_f-t_0,[x],[p]),\\
\Phi(t_f;t_0,x^*,p^*)+G(t_f-t_0,[x],[p])].
\end{multline*}
\end{proposition}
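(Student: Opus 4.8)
The plan is to bound, componentwise, the deviation of an arbitrary trajectory from the \emph{center} trajectory $\Phi(t;t_0,x^*,p^*)$, and to show that this deviation never exceeds $G(t-t_0,[x],[p])$. Fix any initial state $x_0\in[\underline{x},\overline{x}]$ and any input function $\mathbf{p}:[t_0,t_f]\to[\underline{p},\overline{p}]$, and define the error $e(t)=\Phi(t;t_0,x_0,\mathbf{p})-\Phi(t;t_0,x^*,p^*)$. At the initial time $|e(t_0)|=|x_0-x^*|\leq[x]$ componentwise, and the target is to establish $|e(t_f)|\leq G(t_f-t_0,[x],[p])$, which is exactly the claimed inclusion once it holds for all admissible $x_0$ and $\mathbf{p}$.

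First I would write the error dynamics. Differentiating and using the additive-input structure (\ref{eq system ct additive input}), the input terms separate cleanly: writing $\Phi_1,\Phi_2$ for the two trajectories, $\dot e(t)=\bigl(f(t,\Phi_1,0_{n_p})-f(t,\Phi_2,0_{n_p})\bigr)+(\mathbf{p}(t)-p^*)$. A key simplification is that, because $p$ enters additively, $J_x$ does not depend on $p$; hence the Jacobian bound of Assumption~\ref{assum growth bound} holds along the whole trajectory regardless of the (time-varying) input. The mean value theorem then gives $f(t,\Phi_1,0_{n_p})-f(t,\Phi_2,0_{n_p})=A(t)e(t)$ for some matrix $A(t)$ obtained by integrating $J_x$ along the segment between $\Phi_2$ and $\Phi_1$, so that $A_{ii}(t)\leq C_{ii}$ and $|A_{ij}(t)|\leq C_{ij}$ for $j\neq i$. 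Writing $d(t)=\mathbf{p}(t)-p^*$, we have $|d_i(t)|\leq[p]_i$, and the error obeys the linear time-varying system $\dot e=A(t)e+d$.

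Next I would pass to the componentwise absolute value. Using the sign of $e_i$ and the upper Dini derivative, $D^+|e_i|\leq A_{ii}|e_i|+\sum_{j\neq i}|A_{ij}|\,|e_j|+|d_i|$, which by the bounds above is dominated by $C_{ii}|e_i|+\sum_{j\neq i}C_{ij}|e_j|+[p]_i=(C|e|)_i+[p]_i$. Thus $|e|$ satisfies the differential inequality $D^+|e|\leq C|e|+[p]$ with $|e(t_0)|\leq[x]$. Comparing with the linear system $\dot z=Cz+[p]$, $z(t_0)=[x]$, whose solution is exactly $z(t)=e^{C(t-t_0)}[x]+\int_0^{t-t_0}e^{Cs}[p]\,ds=G(t-t_0,[x],[p])$, the comparison lemma yields $|e(t)|\leq z(t)$ for all $t\in[t_0,t_f]$. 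Evaluating at $t_f$ gives $|e(t_f)|\leq G(t_f-t_0,[x],[p])$, so $\Phi(t_f;t_0,x_0,\mathbf{p})$ lies in the claimed interval; since $x_0$ and $\mathbf{p}$ were arbitrary, this interval over-approximates the reachable set.

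The step I expect to require the most care is the application of the comparison lemma: it is valid here precisely because $C$ is Metzler, its off-diagonal entries satisfying $C_{ij}\geq|{J_x}_{ij}|\geq0$, so the comparison system $\dot z=Cz+[p]$ is cooperative (quasimonotone) and the componentwise inequality $D^+|e|\leq C|e|+[p]$ propagates the initial ordering $|e(t_0)|\leq z(t_0)$ to all later times. Without this Metzler property the off-diagonal absolute-value bounds would not assemble into a monotone comparison system, so verifying quasimonotonicity, together with handling the non-smoothness of $|\cdot|$ through Dini derivatives, is the technical crux of the argument.
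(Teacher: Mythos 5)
Your proof is correct: the paper itself gives no proof of Proposition~\ref{prop growth bound} (it defers to~\cite{kapela2009lohner,reissig2016,ArcMai18EALfest}), and your argument---error dynamics via the mean value theorem using the fact that additivity of the input makes $J_x$ independent of $p$, the Dini-derivative inequality $D^+|e|\leq C|e|+[p]$, and a quasimonotone comparison with $\dot z=Cz+[p]$ whose validity rests on $C$ being Metzler---is exactly the componentwise contraction/logarithmic-norm argument those references use, with the comparison solution matching $G$ in (\ref{eq growth bound}). The only implicit step worth flagging is that the mean-value segment joining the two trajectories must remain in the invariant set $X$ on which the Jacobian bounds of Assumption~\ref{assum growth bound} hold (e.g., $X$ convex), a point the paper also leaves tacit.
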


\paragraph{Remarks}
The following variations of this approach are also available in TIRA.
Firstly, Assumption~\ref{assum growth bound} can be replaced by the existence of a scalar contraction/growth factor $c\in\R$ upper bounding the logarithmic norm (associated to any matrix norm) of $J_x(t,x,p^*)$,
\begin{equation*}
c\geq \lim_{h\rightarrow0^+}\frac{\|I_{n_x}+hJ_x(t,x,p^*)\|-1}{h},\quad\forall t\in[t_0,t_f],~x\in X,
\end{equation*}
which can then be used directly in the growth bound definition (\ref{eq growth bound}) and Proposition~\ref{prop growth bound}, replacing matrix $C$ by scalar $c$~\cite{kapela2009lohner}.

Secondly, for general dynamics (\ref{eq system ct}) without the additive input assumption from (\ref{eq system ct additive input}), Proposition~\ref{prop growth bound} is modified by replacing $[p]$ by a user-provided vector $\tilde p\in\R_{\geq0}^{n_x}$ bounding the influence of the input on the dynamics (using componentwise $\geq$ and $|\cdot|$ operators)~\cite{kapela2009lohner}:
\begin{equation*}
\tilde p\geq |f(t,x,p)-f(t,x,p^*)|,\quad\forall t\in[t_0,t_f],~x\in X.
\end{equation*}

Lastly, for general systems (\ref{eq system ct}), TIRA also allows the user to define their own growth bound function $\tilde G:\R_{\geq0}\times\R_{\geq0}^{n_x}\times\R_{\geq0}^{n_p}\rightarrow\R_{\geq0}^{n_x}$ (replacing $G$ in Proposition~\ref{prop growth bound}) that needs to satisfy (with componentwise inequalities and absolute values)~\cite{reissig2016}:
\begin{equation*}
\begin{cases}
\tilde G(\tau,x,p)\geq\tilde G(\tau,y,q),\quad\forall \tau>0,~x\geq y,~p\geq q,\\
|\Phi(t_f;t_0,x_0,p)-\Phi(t_f;t_0,x^*,p^*)|\leq \tilde G(t_f-t_0,|x_0-x^*|,|p-p^*|),\\
\hfill\forall x_0\in[\underline{x},\overline{x}],~p\in[\underline{p},\overline{p}].
\end{cases}
\end{equation*}

A more general result  allows matrix $C$ to be defined over any partition of the state dimensions $\{1,\dots,n_x\}$ (instead of a partition into $n_x$ elements as in Assumption~\ref{assum growth bound})~\cite{kapela2009lohner}.
This approach is not yet implemented in TIRA but a preliminary algorithm exists in~\cite{ArcMai18EALfest}.

\subsection{Continuous-time mixed-monotonicity}
\label{sub mixed ct}

\paragraph{Requirements and limitations}
Mixed-monotonicity of continuous-time systems (\ref{eq system ct}) is an extension of the monotonicity property~\cite{angeli2003monotone}, where a non-monotone system is decomposed into its increasing and decreasing components~\cite{chu1998mixed}.
A first characterization of a mixed-monotone system relying on the sign-stability of its Jacobian matrices~\cite{coogan2016mixed} was recently relaxed into simply having bounded Jacobian matrices~\cite{yang2018sufficient}, and then used for reachability analysis in~\cite{meyer2017hierarchical}.
The result presented below is a further relaxation of the mixed-monotonicity conditions in~\cite{yang2018sufficient} and~\cite{meyer2017hierarchical}, where the diagonal elements of the state Jacobian are not required to be bounded.
\ifproof
\footnote{The proofs of the new results in this section are provided in Appendix~\ref{appendix proofs}.}
\else
\footnote{The proofs of the new results in this section are provided in the extended version of this paper, available online at \url{}}
\fi

\begin{assum}
\label{assum ct mixed mono}
Given an invariant state space $X\subseteq\R^{n_x}$, there exist $\underline{J_x},\overline{J_x}\in\R^{n_x\times n_x}$ (possibly with $\underline{J_x}_{ii}=-\infty$, $\overline{J_x}_{ii}=+\infty$ for $i\in\{1,\dots,n_x\}$) and $\underline{J_p},\overline{J_p}\in\R^{n_x\times n_p}$ such that for all $t\in[t_0,t_f]$, $x\in X$, $p\in[\underline{p},\overline{p}]$ we have $J_x(t,x,p)\in[\underline{J_x},\overline{J_x}]$ and $J_p(t,x,p)\in[\underline{J_p},\overline{J_p}]$.
\end{assum}

\paragraph{Method description}
Let $J_x^*\in\R^{n_x\times n_x}$ and $J_p^*\in\R^{n_x\times n_p}$ denote the center of $[\underline{J_x},\overline{J_x}]$ and $[\underline{J_p},\overline{J_p}]$, respectively.
We first introduce the decomposition function $g:\R\times\R^{n_x}\times\R^{n_p}\times\R^{n_x}\times\R^{n_p}\rightarrow\R^{n_x}$ defined on each dimension $i\in\{1,\dots,n_x\}$ such that for all $t\in[t_0,t_f]$,  $x,\hat x\in X$ and $p,\hat p\in [\underline{p},\overline{p}]$ we have:
\begin{equation}
\label{eq decomposition function}
g_i(t,x,p,\hat x,\hat p)=f_i(t,\xi^i,\pi^i) + \alpha^i(x-\hat x) + \beta^i(p-\hat p),
\end{equation}
where for each dimension $i\in\{1,\dots,n_x\}$, state $\xi^i=[\xi^i_1;\dots;\xi^i_{n_x}]\in\R^{n_x}$, input $\pi^i=[\pi^i_1;\dots;\pi^i_{n_p}]\in\R^{n_p}$ and row vectors $\alpha^i=[\alpha^i_1,\dots,\alpha^i_{n_x}]\in\R^{n_x}$ and $\beta^i=[\beta^i_1,\dots,\beta^i_{n_p}]\in\R^{n_p}$ are defined according to the Jacobian bounds in Assumption~\ref{assum ct mixed mono} such that for all $j\in\{1,\dots,n_x\}$ and $k\in\{1,\dots,n_p\}$:
\begin{equation}
\label{eq mixed mono variables}
\begin{aligned}
(\xi^i_i,\alpha^i_i)=&\ (x_i,0)\\
(\xi^i_j,\alpha^i_j)=&
\begin{cases}
(x_j,\max(0,-\underline{J_x}_{ij}))&\text{ if }j\neq i\text{ and }{J_x^*}_{ij}\geq0,\\
(\hat x_j,\max(0,\overline{J_x}_{ij}))&\text{ if }j\neq i\text{ and }{J_x^*}_{ij}<0,\\
\end{cases}\\
(\pi^i_k,\beta^i_k)=&
\begin{cases}
(p_k,\max(0,-\underline{J_p}_{ik}))&\text{ if }{J_p^*}_{ik}\geq0,\\
(\hat p_k,\max(0,\overline{J_p}_{ik}))&\text{ if }{J_p^*}_{ik}<0.\\
\end{cases}
\end{aligned}
\end{equation}
Then, consider the dynamical system evolving in $\R^{2n_x}$:
\begin{equation}
\label{eq duplicated system}
\begin{pmatrix}
\dot x\\ \dot{\hat x}
\end{pmatrix}
=h(t,x,p,\hat x,\hat p)=
\begin{pmatrix}
g(t,x,p,\hat x,\hat p)\\g(t,\hat x,\hat p,x,p)
\end{pmatrix},
\end{equation}
whose trajectories from initial state $[x_0;\hat{x_0}]\in\R^{2n_x}$ at time $t_0\in\R$ with constant input $[p;\hat p]\in\R^{2n_p}$ are denoted as $\Phi^h(\cdot;t_0,x_0,p,\hat{x_0},\hat p):[t_0,t_f]\rightarrow\R^{2n_x}$.
Finally, let $\Phi^h_{1\dots n_x}$ and $\Phi^h_{n_x+1\dots 2n_x}$ denote the first and last $n_x$ components of $\Phi^h$, respectively.
Then, an over-approximation of the reachable set of (\ref{eq system ct}) is obtained from the evaluation of a single successor $\Phi^h$ of system (\ref{eq duplicated system}).
\begin{proposition}
\label{prop ct mixed mono}
Under Assumption~\ref{assum ct mixed mono} and definitions (\ref{eq decomposition function}-\ref{eq duplicated system}), an over-approximation of the reachable set of (\ref{eq system ct}) in Problem~\ref{pb ct} is given by:
\begin{multline*}
R(t_f;t_0,[\underline{x},\overline{x}],[\underline{p},\overline{p}])\subseteq\\
[\Phi^h_{1\dots n_x}(t_f;t_0,\underline{x},\underline{p},\overline{x},\overline{p}),
\Phi^h_{n_x+1\dots 2n_x}(t_f;t_0,\underline{x},\underline{p},\overline{x},\overline{p})].
\end{multline*}
\end{proposition}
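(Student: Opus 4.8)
The plan is to recognize $g$ from (\ref{eq decomposition function})--(\ref{eq mixed mono variables}) as a \emph{decomposition function} for the vector field $f$ and then convert this into the stated bracketing bound through a monotone-embedding / comparison argument. Concretely, I would verify two properties of $g$. First, \textbf{consistency}: setting $\hat x=x$ and $\hat p=p$ forces $\xi^i=x$ and $\pi^i=p$ (regardless of the sign cases) and cancels the correction terms $\alpha^i(x-\hat x)$ and $\beta^i(p-\hat p)$, so that $g_i(t,x,p,x,p)=f_i(t,x,p)$. Second, the \textbf{sign conditions}: for all $i$, all $j\neq i$ and all $k$,
$$\frac{\partial g_i}{\partial x_j}\geq 0,\quad \frac{\partial g_i}{\partial \hat x_j}\leq 0,\quad \frac{\partial g_i}{\partial p_k}\geq 0,\quad \frac{\partial g_i}{\partial \hat p_k}\leq 0,$$
together with the key structural fact that $g_i$ does not depend on $\hat x_i$ at all.

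The sign conditions are the technical heart, and I would establish them through the sign-of-center case split built into (\ref{eq mixed mono variables}). For $j\neq i$ with ${J_x^*}_{ij}\geq 0$ we have $\xi^i_j=x_j$, so $\partial g_i/\partial x_j = {J_x}_{ij}(t,\xi^i,\pi^i)+\max(0,-\underline{J_x}_{ij})$; since ${J_x}_{ij}\geq\underline{J_x}_{ij}$ by Assumption~\ref{assum ct mixed mono}, this is $\geq 0$, while $\partial g_i/\partial \hat x_j = -\max(0,-\underline{J_x}_{ij})\leq 0$ because $f_i$ does not see $\hat x_j$ in this case. The case ${J_x^*}_{ij}<0$ is symmetric, using ${J_x}_{ij}\leq\overline{J_x}_{ij}$ and the correction $\max(0,\overline{J_x}_{ij})$, and the input derivatives are handled identically with $J_p,\underline{J_p},\overline{J_p}$. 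The crux — and the source of the claimed relaxation over~\cite{yang2018sufficient,meyer2017hierarchical} — is the diagonal convention $(\xi^i_i,\alpha^i_i)=(x_i,0)$: it makes $g_i$ independent of $\hat x_i$ and lets the possibly unbounded diagonal entry ${J_x}_{ii}$ enter only through the matched argument $x_i$, so no sign constraint on ${J_x}_{ii}$ is ever needed.

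Finally, I would turn these infinitesimal facts into the global bracketing claim. Writing $(\underline\Phi(t),\overline\Phi(t))=\Phi^h(t;t_0,\underline x,\underline p,\overline x,\overline p)$ and letting $y(t)=\Phi(t;t_0,x_0,\mathbf p)$ be any trajectory of (\ref{eq system ct}) with $x_0\in[\underline x,\overline x]$ and $\mathbf p(t)\in[\underline p,\overline p]$, I claim $y(t)$ stays in the time-varying box $[\underline\Phi(t),\overline\Phi(t)]$. This follows from a flow-invariance (Nagumo-type) argument: at any instant where $y_i$ touches the lower face, i.e.\ $\underline\Phi_i=y_i$ while $\underline\Phi\leq y\leq\overline\Phi$ componentwise, consistency rewrites $\dot y_i=g_i(t,y,\mathbf p,y,\mathbf p)$, and changing the arguments one coordinate at a time using the sign conditions ($\underline\Phi_j\leq y_j$ and $\overline\Phi_j\geq y_j$ for $j\neq i$, $\underline p\leq\mathbf p(t)\leq\overline p$, the matched $i$-th state, and independence from $\hat x_i$) yields $\dot{\underline\Phi}_i=g_i(t,\underline\Phi,\underline p,\overline\Phi,\overline p)\leq g_i(t,y,\mathbf p,y,\mathbf p)=\dot y_i$, so $y$ cannot exit downward; the upper face is symmetric. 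Since $\underline\Phi(t_0)=\underline x\leq x_0\leq\overline x=\overline\Phi(t_0)$, the box contains $y(t)$ for all $t\in[t_0,t_f]$, which is exactly the asserted inclusion at $t_f$. I expect the main obstacle to be the rigor of this last step rather than the algebra: the vector differential inequality must be justified through a genuine comparison/invariance theorem, handling touching times, several coordinates hitting the boundary simultaneously, and the fact that $\mathbf p$ is only piecewise continuous — which is precisely where the off-diagonal sign conditions, i.e.\ the quasimonotonicity that makes the embedding system (\ref{eq duplicated system}) order-preserving, are indispensable.
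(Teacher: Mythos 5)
Your proposal is correct and follows essentially the same route as the paper: both verify the diagonal consistency $g(t,x,p,x,p)=f(t,x,p)$ and the Kamke--M\"uller sign conditions of the embedding system (including the key point that $g_i$ is independent of $\hat x_i$, which is exactly what permits unbounded diagonal Jacobian entries), and then conclude that the single embedding trajectory from $(\underline{x},\underline{p},\overline{x},\overline{p})$ brackets every trajectory of (\ref{eq system ct}). The only difference is the last step, where the paper simply invokes the order-preservation result for monotone systems from~\cite{angeli2003monotone} applied to (\ref{eq duplicated system}) (together with a symmetry observation), whereas you re-derive that comparison principle by hand via a Nagumo-type flow-invariance argument --- a legitimate unpacking of the same citation, whose technical caveats you correctly identify.
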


\paragraph{Remarks}
The mixed-monotonicity definition used in this section encompasses monotonicity~\cite{angeli2003monotone} as a particular case.
Proposition~\ref{prop ct mixed mono} applied to a monotone system thus provides the same result as the reachability method defined for monotone systems in~\cite{moor2002abstraction}.
\begin{proposition}
\label{prop ct mono}
If system (\ref{eq system ct}) is monotone with respect to orthants of $\R^{n_x}$ and $\R^{n_p}$, then Proposition~\ref{prop ct mixed mono} gives the unique tight over-approximating interval of the reachable set of (\ref{eq system ct}) in Problem~\ref{pb ct}.
\end{proposition}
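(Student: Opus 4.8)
The plan is to reduce the monotone case to an explicit pair of trajectories of the original system~(\ref{eq system ct}) and then compare this pair against the classical tight over-approximation for monotone systems. Monotonicity with respect to orthants of $\R^{n_x}$ and $\R^{n_p}$ means there exist sign vectors $\epsilon\in\{-1,+1\}^{n_x}$ and $\delta\in\{-1,+1\}^{n_p}$ such that the flow preserves the partial orders induced by these orthants; equivalently, the off-diagonal state Jacobian entries and the input Jacobian entries satisfy the sign conditions $\epsilon_i\epsilon_j{J_x}_{ij}\geq0$ (for $j\neq i$) and $\epsilon_i\delta_k{J_p}_{ik}\geq0$ on $X\times[\underline{p},\overline{p}]$. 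First I would record the consequence of these sign conditions for the bounds in Assumption~\ref{assum ct mixed mono}: taking the Jacobian bounds tight, whenever $\epsilon_i=\epsilon_j$ we may pick $\underline{J_x}_{ij}\geq0$ and ${J_x^*}_{ij}\geq0$, and whenever $\epsilon_i\neq\epsilon_j$ we may pick $\overline{J_x}_{ij}\leq0$ and ${J_x^*}_{ij}\leq0$ (analogously for the input block).

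Substituting these signs into the case distinction of~(\ref{eq mixed mono variables}) collapses every correction term: $\alpha^i_j=\max(0,-\underline{J_x}_{ij})=0$ in the first case and $\alpha^i_j=\max(0,\overline{J_x}_{ij})=0$ in the second, and likewise $\beta^i_k=0$. The decomposition function~(\ref{eq decomposition function}) therefore reduces to $g_i(t,x,p,\hat x,\hat p)=f_i(t,\xi^i,\pi^i)$, where $\xi^i_j=x_j$ if $\epsilon_j=\epsilon_i$ and $\xi^i_j=\hat x_j$ otherwise, and similarly for $\pi^i_k$ according to $\delta_k$ versus $\epsilon_i$. The key simplification I would then make is a diagonal change of coordinates $z=\mathrm{diag}(\epsilon)x$, $w=\mathrm{diag}(\delta)p$ (and $\hat z,\hat w$ accordingly), which renders~(\ref{eq system ct}) cooperative and the induced orders standard. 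In these coordinates the selection rule above always picks the $x$-components for the first half and the $\hat x$-components for the second half, so the coupled system~(\ref{eq duplicated system}) \emph{decouples} into two independent copies of the original dynamics. Running it from the initial condition $(\underline{x},\underline{p},\overline{x},\overline{p})$ prescribed in Proposition~\ref{prop ct mixed mono} then yields exactly the two trajectories of~(\ref{eq system ct}) issued from the $\preceq$-infimum and $\preceq$-supremum corners of the initial-state and input boxes.

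It remains to match these corner trajectories with the tight over-approximation. Here I would invoke the comparison principle for monotone systems~\cite{angeli2003monotone,moor2002abstraction}: since the flow is order-preserving and a constant input at the extremal corner dominates (respectively, is dominated by) every admissible time-varying input $\mathbf{p}:[t_0,t_f]\to[\underline{p},\overline{p}]$ in the orthant order, every point of $R(t_f;t_0,[\underline{x},\overline{x}],[\underline{p},\overline{p}])$ lies between the two corner trajectories coordinatewise after applying $\mathrm{diag}(\epsilon)$. This gives the inclusion of Proposition~\ref{prop ct mixed mono}. For tightness and uniqueness I would observe that the two corner trajectories are themselves elements of $R$, obtained from admissible initial states under admissible constant inputs; hence each coordinate of the lower (respectively upper) endpoint equals the exact minimum (respectively maximum) of that coordinate over $R$, so the interval cannot be shrunk in any direction, and, the tight interval over-approximation being unique by definition, the claim follows.

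The hard part will be the general-orthant bookkeeping: verifying that the corner assignment encoded in~(\ref{eq mixed mono variables}) is precisely the $\preceq$-infimum/supremum corner in the partial order induced by $(\epsilon,\delta)$, so that the decoupling in the transformed coordinates is legitimate and the two halves of~(\ref{eq duplicated system}) reproduce the extremal flows rather than some mixed corner. The positive-orthant (cooperative) case is transparent because $g$ then loses its dependence on $(\hat x,\hat p)$ altogether; the remaining effort is to confirm that the sign-driven coordinate change handles every other orthant uniformly, the degenerate case of Jacobian entries that vanish identically being harmless since $f_i$ does not depend on the corresponding coordinate.
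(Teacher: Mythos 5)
Your proposal is correct and follows essentially the same route as the paper's proof: show that the sign-stability of the Jacobians forces all correction terms $\alpha^i$, $\beta^i$ to vanish, observe that the duplicated system~(\ref{eq duplicated system}) then decouples into two genuine copies of~(\ref{eq system ct}) evaluated at the extremal corners of the initial-state and input boxes, and conclude tightness because those two corner trajectories belong to the reachable set. The only cosmetic difference is that you normalize the orthant by a diagonal sign change of coordinates whereas the paper reorders the state and input components so that $\varepsilon=[0_{n_x^+};1_{n_x^-}]$; these are equivalent bookkeeping devices.
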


\subsection{Sampled-data mixed-monotonicity}
\label{sub sensitivity}
\paragraph{Requirements and limitations}
This method, presented in~\cite{meyer2018sampled}, corresponds to a discrete-time mixed-monotonicity approach applied to the sampled version of a continuous-time system.
It relies on bounds of the sensitivity matrices and it is an extension of the approach for systems with sign-stable sensitivities in~\cite{xue2017cdc}.
As mentioned in Section~\ref{sec problem}, this approach is limited to systems (\ref{eq system ct}) with constant input functions over the considered time range $[t_0,t_f]$ (sensitivity $S_p$ cannot be defined otherwise).
\begin{assum}
\label{assum sensi}
There exists $\underline{S_x},\overline{S_x}\in\R^{n_x\times n_x}$ and $\underline{S_p},\overline{S_p}\in\R^{n_x\times n_p}$ such that for all initial state $x_0\in [\underline{x},\overline{x}]$ and constant input $p\in[\underline{p},\overline{p}]$ we have $S_x(t_f;t_0,x_0,p)\in[\underline{S_x},\overline{S_x}]$ and $S_p(t_f;t_0,x_0,p)\in[\underline{S_p},\overline{S_p}]$.
\end{assum}

\paragraph{Method description}
Let $S_x^*\in\R^{n_x\times n_x}$ and $S_p^*\in\R^{n_x\times n_p}$ denote the center of $[\underline{S_x},\overline{S_x}]$ and $[\underline{S_p},\overline{S_p}]$, respectively.
For each $i,j\in\{1,\dots,n_x\}$ and $k\in\{1,\dots,n_p\}$, define $\underline{\xi}^i_j,\overline{\xi}^i_j,\alpha^i_j,\underline{\pi}^i_k,\overline{\pi}^i_k,\beta^i_k\in\R$ such that
\begin{equation}
\label{eq sensi variables}
\begin{aligned}
(\underline{\xi}^i_j,\overline{\xi}^i_j,\alpha^i_j)=
\begin{cases}
(\underline{x}_j,\overline{x}_j,\min(0,\underline{S_x}_{ij}))&\text{ if }{S_x^*}_{ij}\geq0,\\
(\overline{x}_j,\underline{x}_j,\max(0,\overline{S_x}_{ij}))&\text{ if }{S_x^*}_{ij}<0,\\
\end{cases}\\
(\underline{\pi}^i_k,\overline{\pi}^i_k,\beta^i_k)=
\begin{cases}
(\underline{p}_k,\overline{p}_k,\min(0,\underline{S_p}_{ik}))&\text{ if }{S_p^*}_{ik}\geq0,\\
(\overline{p}_k,\underline{p}_k,\max(0,\overline{S_p}_{ik}))&\text{ if }{S_p^*}_{ik}<0.
\end{cases}
\end{aligned}
\end{equation}

For all $i\in\{1,\dots,n_x\}$, define the states $\underline{\xi}^i=[\underline{\xi}^i_1;\dots;\underline{\xi}^i_{n_x}]\in\R^{n_x}$, $\overline{\xi}^i=[\overline{\xi}^i_1;\dots;\overline{\xi}^i_{n_x}]\in\R^{n_x}$, inputs $\underline{\pi}^i=[\underline{\pi}^i_1;\dots;\underline{\pi}^i_{n_p}]\in\R^{n_p}$, $\overline{\pi}^i=[\overline{\pi}^i_1;\dots;\overline{\pi}^i_{n_p}]\in\R^{n_p}$ and row vectors $\alpha^i=[\alpha^i_1,\dots,\alpha^i_{n_x}]\in\R^{n_x}$ and $\beta^i=[\beta^i_1,\dots,\beta^i_{n_p}]\in\R^{n_p}$.
Then an over-approximation of the reachable set of (\ref{eq system ct}) is obtained as follows.
\begin{proposition}
\label{prop sensi}
Under Assumption~\ref{assum sensi} and the definitions in (\ref{eq sensi variables}), an over-approximation of the reachable set of (\ref{eq system ct}) in Problem~\ref{pb ct} is given in each dimension $i\in\{1,\dots,{n_x}\}$ by:
\begin{multline*}
R_i(t_f;t_0,[\underline{x},\overline{x}],[\underline{p},\overline{p}])\subseteq\\
[\Phi_i(t_f;t_0,\underline{\xi}^i,\underline{\pi}^i)-\alpha^i(\underline{\xi}^i-\overline{\xi}^i)-\beta^i(\underline{\pi}^i-\overline{\pi}^i),\\
\Phi_i(t_f;t_0,\overline{\xi}^i,\overline{\pi}^i)+\alpha^i(\underline{\xi}^i-\overline{\xi}^i)+\beta^i(\underline{\pi}^i-\overline{\pi}^i)].
\end{multline*}
\end{proposition}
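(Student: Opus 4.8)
The plan is to regard the time-$t_f$ flow $\Phi(t_f;t_0,\cdot,\cdot)$ as a single discrete-time map in the variables $(x_0,p)$ and run a mixed-monotonicity argument on it, the role played by Jacobian bounds in the discrete-time case now being taken by the sensitivity bounds of Assumption~\ref{assum sensi}. Since $f$ is continuously differentiable and the trajectory over $[t_0,t_f]$ is assumed to exist and be unique, $\Phi(t_f;t_0,\cdot,\cdot)$ is continuously differentiable in $(x_0,p)$, with the $(i,j)$ entry of $S_x$ equal to $\partial\Phi_i/\partial x_{0,j}$ and the $(i,k)$ entry of $S_p$ equal to $\partial\Phi_i/\partial p_k$. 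I would fix a dimension $i\in\{1,\dots,n_x\}$ and establish the upper bound on $R_i$; the lower bound is entirely symmetric.

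The key step is to compare $\Phi_i$ at an arbitrary $(x_0,p)\in[\underline x,\overline x]\times[\underline p,\overline p]$ with its value at the ``mixed-monotone corner'' $(\overline{\xi}^i,\overline{\pi}^i)$. Both points lie in this box, which is convex, so the whole segment joining them stays inside it; hence the bounds of Assumption~\ref{assum sensi} hold at every point of the segment. Integrating $\tfrac{d}{ds}\Phi_i$ along it (fundamental theorem of calculus, with $S_x(\cdot)_{ij}$, $S_p(\cdot)_{ik}$ denoting the entries evaluated on the segment) gives
\begin{multline*}
\Phi_i(t_f;t_0,x_0,p)-\Phi_i(t_f;t_0,\overline{\xi}^i,\overline{\pi}^i)\\
=\int_0^1\Big[\sum_{j=1}^{n_x} S_x(\cdot)_{ij}\,(x_{0,j}-\overline{\xi}^i_j)+\sum_{k=1}^{n_p} S_p(\cdot)_{ik}\,(p_k-\overline{\pi}^i_k)\Big]ds,
\end{multline*}
where $S_x(\cdot)_{ij}\in[\underline{S_x}_{ij},\overline{S_x}_{ij}]$ and $S_p(\cdot)_{ik}\in[\underline{S_p}_{ik},\overline{S_p}_{ik}]$.

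I would then bound the integrand termwise. The sign of ${S_x^*}_{ij}$ dictates through (\ref{eq sensi variables}) at which corner $\overline{\xi}^i_j$ sits, and hence the sign of the increment $x_{0,j}-\overline{\xi}^i_j$: when ${S_x^*}_{ij}\geq0$ we have $\overline{\xi}^i_j=\overline x_j$ so $x_{0,j}-\overline{\xi}^i_j\leq0$ and the term is maximized by the smallest admissible sensitivity $\underline{S_x}_{ij}$, whereas when ${S_x^*}_{ij}<0$ the increment is $\geq0$ and the term is maximized by $\overline{S_x}_{ij}$. In either case, a short computation (splitting further on whether that extreme sensitivity has the ``expected'' sign) shows $S_x(\cdot)_{ij}(x_{0,j}-\overline{\xi}^i_j)\leq\alpha^i_j(\underline{\xi}^i_j-\overline{\xi}^i_j)$, precisely because $\alpha^i_j$ in (\ref{eq sensi variables}) captures the ``wrong-sign'' part of the sensitivity range via the $\min/\max$-with-zero, while $\underline{\xi}^i_j-\overline{\xi}^i_j$ is the correctly signed box width. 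The input terms are bounded identically with $\beta^i_k$ and $\underline{\pi}^i_k-\overline{\pi}^i_k$. Since these bounds are constant in $s$, summing over $j,k$ and integrating yields $\Phi_i(t_f;t_0,x_0,p)\leq\Phi_i(t_f;t_0,\overline{\xi}^i,\overline{\pi}^i)+\alpha^i(\underline{\xi}^i-\overline{\xi}^i)+\beta^i(\underline{\pi}^i-\overline{\pi}^i)$, the claimed upper bound; the lower bound follows by the same argument run from the segment starting at $(\underline{\xi}^i,\underline{\pi}^i)$ with the inequalities reversed.

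I expect the termwise case analysis to be the main obstacle, not because any single case is difficult but because one must verify all four sign combinations (the sign of ${S_x^*}_{ij}$ against the relevant endpoint of the sensitivity interval) and confirm that the $\min/\max$-with-zero definitions of $\alpha^i_j$ and $\beta^i_k$ make the correction term simultaneously nonnegative and large enough in every case.
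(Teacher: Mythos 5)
Your proposal is correct. The paper does not actually prove Proposition~\ref{prop sensi} itself---it defers to the cited reference~\cite{meyer2018sampled}---but your argument (fundamental theorem of calculus along the segment joining $(x_0,p)$ to the mixed-monotone corner, followed by the termwise sign analysis showing $S_x(\cdot)_{ij}(x_{0,j}-\overline{\xi}^i_j)\leq\alpha^i_j(\underline{\xi}^i_j-\overline{\xi}^i_j)$ in all four sign combinations) is sound and is essentially the standard proof of this result; I checked the case analysis and each branch closes. The only point worth making explicit is that $p$ must be held constant over $[t_0,t_f]$ (so that $S_p$ and the flow map $\Phi(t_f;t_0,\cdot,\cdot)$ as a function of $(x_0,p)$ are well defined), which is exactly the restriction the paper states for this method.
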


\paragraph{Remarks}
The approach in~\cite{xue2017cdc} restricted to systems with sign-stable sensitivity matrices (i.e.\ $\underline{S_x}_{ij}\geq0$ or $\overline{S_x}_{ij}\leq0$ for all $i,j$) is covered by Proposition~\ref{prop sensi} as the particular case where $\alpha^i=0_{n_x}$ and $\beta^i=0_{n_p}$ for all $i\in\{1,\dots,n_x\}$.
In such case, the interval in Proposition~\ref{prop sensi} is a tight over-approximation of the reachable set.

If the user does not provide sensitivity bounds as in Assumption~\ref{assum sensi}, TIRA offers two methods to compute such bounds (technical details on both methods can be found in~\cite{meyer2018sampled}).
The first one relies on Jacobian bounds similarly to Assumption~\ref{assum ct mixed mono} and applies \emph{interval arithmetic} as in~\cite{althoff2007reachability} to obtain sensitivity bounds guaranteed to satisfy Assumption~\ref{assum sensi}. However, this approach tends to be overly conservative due to being based on global Jacobian bounds.

The second one approximates sensitivity bounds through \emph{sampling and falsification}: first evaluate the sensitivity matrices $S_x$ and $S_p$ for some sample pairs $(x_0,p)\in[\underline{x},\overline{x}]\times[\underline{p},\overline{p}]$; then iteratively falsify the obtained bounds through an optimization problem looking for pairs $(x_0,p)$ whose sensitivities do not belong to the current bounds.
This simulation-based approach does not require any additional assumption and results in much better approximations of the sensitivity bounds, but requires longer computation times and lacks formal guarantees that Assumption~\ref{assum sensi} is satisfied.

\subsection{Discrete-time mixed-monotonicity}
\label{sub mixed dt}

\paragraph{Requirements and limitations}
As highlighted in~\cite{meyer2018sampled}, any discrete-time system (\ref{eq system dt}) can be defined as the sampled version of a continuous-time system (\ref{eq system ct}): $x^+=F(t,x,p)=\Phi(t_f;t,x,p)$ with constant input $p$ over the time range $[t,t_f]$.
Therefore, the approach used in Section~\ref{sub sensitivity} for a sampled continuous-time system can also be applied to a discrete-time system.
The only difference is that conditions on the sensitivity matrices $S_x(t_f)$ and $S_p(t_f)$ of (\ref{eq system ct}) are to be replaced by their equivalent on the Jacobian matrices $J_x$ and $J_p$ of (\ref{eq system dt}).
\begin{assum}
\label{assum dt mixed mono}
There exists $\underline{J_x},\overline{J_x}\in\R^{n_x\times n_x}$ and $\underline{J_p},\overline{J_p}\in\R^{n_x\times n_p}$ such that for all initial state $x_0\in [\underline{x},\overline{x}]$ and input $p\in[\underline{p},\overline{p}]$ we have $J_x(t_0,x_0,p)\in[\underline{J_x},\overline{J_x}]$ and $J_p(t_0,x_0,p)\in[\underline{J_p},\overline{J_p}]$.
\end{assum}

\paragraph{Method description}
Proposition~\ref{prop sensi} is then adapted as follows.
\begin{proposition}
\label{prop dt mixed mono}
Under Assumption~\ref{assum dt mixed mono}, consider $\underline{\xi}^i$, $\overline{\xi}^i$, $\underline{\pi}^i$, $\overline{\pi}^i$, $\alpha^i$, $\beta^i$ defined as in (\ref{eq sensi variables}) but using the Jacobian bounds instead of the sensitivity bounds.
Then, an over-approximation of the reachable set of (\ref{eq system dt}) in Problem~\ref{pb dt} is given in each dimension $i\in\{1,\dots,{n_x}\}$ by:
\begin{align*}
R_i(t_0,[\underline{x},\overline{x}],[\underline{p},\overline{p}])\subseteq
[&F(t_0,\underline{\xi}^i,\underline{\pi}^i)-\alpha^i(\underline{\xi}^i-\overline{\xi}^i)-\beta^i(\underline{\pi}^i-\overline{\pi}^i),\\
&F(t_0,\overline{\xi}^i,\overline{\pi}^i)+\alpha^i(\underline{\xi}^i-\overline{\xi}^i)+\beta^i(\underline{\pi}^i-\overline{\pi}^i)].
\end{align*}
\end{proposition}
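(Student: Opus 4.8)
The plan is to reduce Proposition~\ref{prop dt mixed mono} to the already-established Proposition~\ref{prop sensi} by realizing the discrete-time map $F$ as the time-$t_f$ flow of a continuous-time system. Following the observation recalled from~\cite{meyer2018sampled}, I would regard the discrete-time system~(\ref{eq system dt}) as the sampled version of an auxiliary continuous-time system~(\ref{eq system ct}) whose flow satisfies $\Phi(t_f;t_0,x_0,p)=F(t_0,x_0,p)$ for every $x_0$ and every constant input $p$ held over $[t_0,t_f]$. The whole argument then amounts to verifying that, under this identification, every object entering Proposition~\ref{prop sensi} coincides with its discrete-time counterpart, so that the continuous-time conclusion transcribes verbatim into the discrete-time one.

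First I would pin down the identification of derivatives. Differentiating $\Phi(t_f;t_0,x_0,p)=F(t_0,x_0,p)$ with respect to $x_0$ and to $p$ yields $S_x(t_f;t_0,x_0,p)=J_x(t_0,x_0,p)$ and $S_p(t_f;t_0,x_0,p)=J_p(t_0,x_0,p)$ for all $x_0\in[\underline{x},\overline{x}]$ and $p\in[\underline{p},\overline{p}]$. Consequently Assumption~\ref{assum dt mixed mono} on the Jacobian bounds of $F$ is literally Assumption~\ref{assum sensi} on the sensitivity bounds of the auxiliary flow, with the same matrices $\underline{J_x},\overline{J_x},\underline{J_p},\overline{J_p}$ playing the role of $\underline{S_x},\overline{S_x},\underline{S_p},\overline{S_p}$. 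In particular the centers agree, $J_x^*=S_x^*$ and $J_p^*=S_p^*$, so the sign-based case selection in~(\ref{eq sensi variables}) is identical in both settings, and the vectors $\underline{\xi}^i,\overline{\xi}^i,\underline{\pi}^i,\overline{\pi}^i$ together with the row vectors $\alpha^i,\beta^i$ built from the Jacobian bounds coincide with those built from the sensitivity bounds, exactly as the statement of Proposition~\ref{prop dt mixed mono} prescribes.

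Next I would match the reachable sets and conclude. Since the sampled-data method is restricted to constant inputs, the set over-approximated by Proposition~\ref{prop sensi} is the constant-input continuous-time reachable set $\{\Phi(t_f;t_0,x_0,p)\mid x_0\in[\underline{x},\overline{x}],\,p\in[\underline{p},\overline{p}]\}$, which under the embedding equals the discrete-time reachable set $R(t_0,[\underline{x},\overline{x}],[\underline{p},\overline{p}])$ of Problem~\ref{pb dt}. Applying Proposition~\ref{prop sensi} componentwise and then replacing each occurrence of $\Phi(t_f;t_0,\cdot,\cdot)$ by $F(t_0,\cdot,\cdot)$ produces precisely the stated interval bounds in each dimension $i$, finishing the proof.

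The main obstacle I anticipate is not the algebra, which is a direct relabeling, but justifying the reduction itself: one must confirm that the continuously differentiable $F$ genuinely arises as such a flow so that the sensitivities $S_x,S_p$ are well defined and equal to $J_x,J_p$, and that the restriction to constant inputs makes the two reachable sets \emph{literally} coincide rather than one merely over-approximate the other. Once this correspondence is secured, Proposition~\ref{prop dt mixed mono} follows as an immediate corollary of Proposition~\ref{prop sensi}; indeed, since the proof of the latter uses $\Phi(t_f;\cdot)$ only as a differentiable map whose derivative lies in the prescribed interval, the very same mean-value and sign-correction reasoning applies to $F$ directly with its Jacobian bounds.
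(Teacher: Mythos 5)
Your proposal follows essentially the same route as the paper, which justifies Proposition~\ref{prop dt mixed mono} precisely by viewing $F$ as the sampled flow $\Phi(t_f;t_0,\cdot,\cdot)$ of a continuous-time system, identifying the Jacobian bounds of Assumption~\ref{assum dt mixed mono} with the sensitivity bounds of Assumption~\ref{assum sensi}, and transcribing Proposition~\ref{prop sensi}. Your closing remark is also the right way to make the reduction airtight: since not every differentiable $F$ is literally a flow map, the clean justification is that the underlying mean-value argument for Proposition~\ref{prop sensi} uses $\Phi(t_f;\cdot)$ only as a differentiable map with interval-bounded derivative, so it applies to $F$ directly.
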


\paragraph{Remarks}
Similarly to the continuous-time mixed-monotonicity in Section~\ref{sub mixed ct}, Proposition~\ref{prop dt mixed mono} encompasses the method for discrete-time monotone systems as a particular case.
In addition, for any discrete-time system with sign-stable Jacobian matrices (i.e.\ for monotone~\cite{hirsch2005monotone} and mixed-monotone systems as in~\cite{coogan2015efficient}), Proposition~\ref{prop dt mixed mono} returns a tight over-approximation of the reachable set.

\section{Toolbox description}
\label{sec toolbox}
The architecture of the toolbox TIRA is sketched in Figure~\ref{fig archi}.
Its philosophy is to provide a library of interval-based reachability methods that can all be accessed through a unique and simple interface function.
On one side of this interface is the user-provided definition of the reachability problem (time range and intervals of initial states and inputs).
On the other side are each of the over-approximation methods described in Section~\ref{sec reachability} and defined in separate functions.
Therefore, this interface function works as a hub that does not only call the over-approximation method requested by the user, but also checks beforehand if the considered system meets all the requirements for the application of this method.

Several over-approximation methods can then easily be tried to solve the same reachability problem by repeating the same call of this interface after changing the parameter defining the method choice.
If the user does not request a specific method, the interface function picks the most suitable method (following the order in Section~\ref{sec reachability} and Algorithm~\ref{algo hub}) based on the optional system information provided by the user (e.g.\ signs or bounds of the Jacobian matrices).

\begin{algorithm2e}[tbh]
  \SetKwFunction{isDefined}{isDefined}
  \KwIn{$\dot x=f(t,x,p)$ or $x^+=F(t,x,p)$, $t_0$, ($t_f$), $[\underline{x},\overline{x}]$, $[\underline{p},\overline{p}]$}
  \eIf($\backslash\backslash$Continuous-time methods){\isDefined$(t_f)$}{
    \lIf($\backslash\backslash$C/GB){Assumption~\ref{assum growth bound}}{Proposition~\ref{prop growth bound}}
    \lElseIf($\backslash\backslash$CTMM){Assumption~\ref{assum ct mixed mono}}{Proposition~\ref{prop ct mixed mono}}
    \lElse($\backslash\backslash$SDMM: sampling and falsification){Proposition~\ref{prop sensi}}
  }($\backslash\backslash$Discrete-time methods){
    \lIf($\backslash\backslash$DTMM){Assumption~\ref{assum dt mixed mono}}{Proposition~\ref{prop dt mixed mono}}
  }
  \KwOut{Over-approximation $[\underline{R},\overline{R}]$ of $R(t_f;t_0,[\underline{x},\overline{x}],[\underline{p},\overline{p}])$}
\caption{Architecture of the hub function $TIRA$.\label{algo hub}}
\end{algorithm2e}

In addition, the main benefit of the chosen architecture for TIRA is its extensibility.
Indeed, while the four methods from Section~\ref{sec reachability} implemented in TIRA cover a wide range of systems, we do not claim that all existing interval-based reachability methods are included in TIRA.
Since the toolbox is written in Matlab and is thus platform independent and does not require an installation, the users can then easily extend this tool library by defining their own over-approximation method in a separate function and adding its call anywhere in the hub function described in Algorithm~\ref{algo hub}.

We end this brief description of the toolbox architecture by a summary of the required and optional user inputs mentioned above and sketched in Figure~\ref{fig archi}.
\begin{itemize}
\item \emph{Required}: system description as in (\ref{eq system ct}) or (\ref{eq system dt}); definition of Problem~\ref{pb ct} ($t_0$, $t_f$, $[\underline x,\overline x]$, $[\underline p,\overline p]$) or~\ref{pb dt} ($t_0$, $[\underline x,\overline x]$, $[\underline p,\overline p]$).
\item \emph{Recommended}: additional system information used by some methods (signs and bounds of the Jacobians and sensitivities, contraction matrix, growth bound function). If none is provided, TIRA calls the sampled-data mixed-monotonicity approach in Section~\ref{sub sensitivity} using the sampling and falsification method to approximate the sensitivity bounds.
\item \emph{Optional}: request for a specific method; modification of the default internal parameters for some solvers; add new over-approximation methods designed by the user.
\end{itemize}
\begin{figure}[thb]
\centering
\includegraphics[width=\columnwidth]{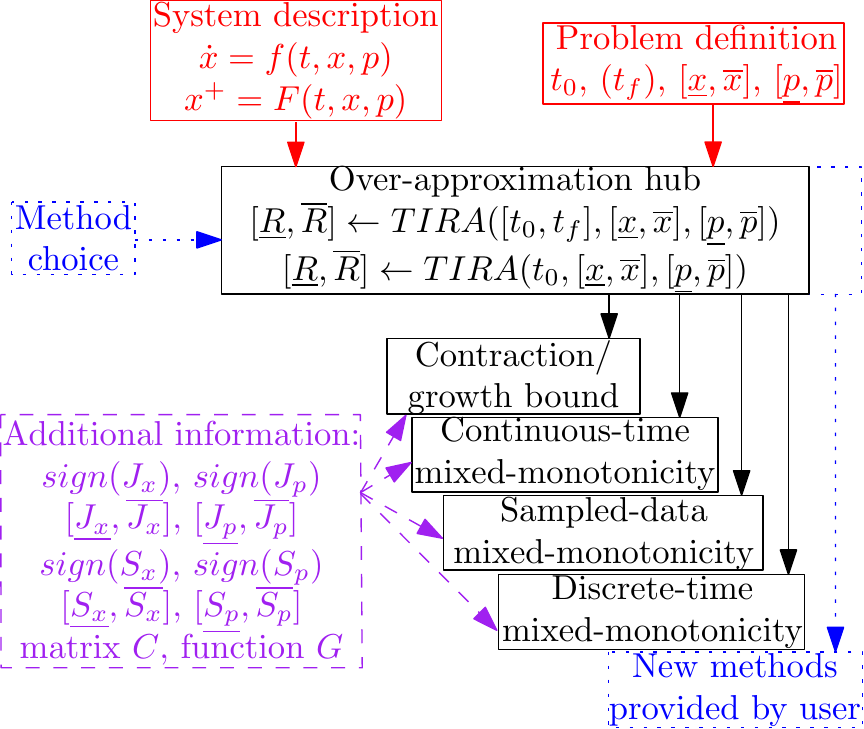}
\caption{TIRA's architecture: black blocks are fully implemented in TIRA; colored blocks are possible user inputs implemented as functions to be filled (required in plain red, recommended in dashed purple, optional in dotted blue).\label{fig archi}}
\end{figure}

\section{Numerical examples}
\label{sec examples}
We consider a $n_x$-link traffic network describing a \emph{diverge} junction (the vehicles in link $1$ divide evenly among the outgoing links $2$ and $3$) followed by downstream links so that traffic on link $2$ flows to link $4$ then to link $6$, \emph{etc.}, and, likewise, traffic flows from link $3$ to $5$ to $7$, \emph{etc.}
Let functions $k:\R^{n_x}\rightarrow\R$ and $l:\R^2\rightarrow\R$ be such that $k(x)=\min(c,vx_1,2w(\bar x-x_2),2w(\bar x-x_3))$ and $l(x_i,x_j)=\min\left(c, vx_i,w(\bar{x}-x_j)/\beta\right)$.
The considered continuous-time model inspired by~\cite{coogan2016benchmark} and written $\dot x=f(x)+p$ as in (\ref{eq system ct additive input}) is then given by:
\begin{align*}
f_1(x)&=-k(x)/T,\\
f_{i}(x)&=(k(x)/2-l(x_i,x_{i+2}))/T,\ &i\in\{2,3\}\\
f_{i}(x)&=(\beta l(x_{i-2},x_i)-l(x_i,x_{i+2}))/T,\ &i\in\{4,\ldots,n\}
\end{align*}
where the term $w(\bar{x}-x_{i+2})/\beta$ is excluded from the minimization in $l$ for $i\in\{n_x-1,n_x\}$.
State $x\in\R^{n_x}$ is the vehicle density on each link, input $p\in\R^{n_x}$ is such that $p_1\in[4/3,2]$ is the constant but uncertain vehicle inflow to link $1$ and $p_i=0$ for $i\geq 2$, and the known parameters of the network $T=30$, $c=40$, $v=0.5$, $\bar x=320$, $w=1/6$ and $\beta=3/4$ are taken from~\cite{coogan2016benchmark}.
Based on these dynamics, we provided to TIRA global bounds for the Jacobian matrices (omitted in this paper due to space limitation).

For the purpose of visualization of the results, we first consider $n_x=3$ and run a function trying all the main over-approximation methods implemented in TIRA with an interval of initial states defined by $\underline{x}=[150;180;100]$ and $\overline{x}=[200;300;220]$.
The methods based on contraction/growth bound, continuous-time mixed-monotonicity and sampled-data mixed-monotonicity (with both interval arithmetic and sampling/falsification submethods to obtain bounds of the sensitivities matrices) are then successfully run with computation times as reported in Table~\ref{tab time}.
The method in Section~\ref{sub mixed dt} is skipped since we do not have a discrete-time system.
The projection onto the $(x_1,x_2)$-plane of the four over-approximations is showed in Figure~\ref{fig traffic OA} alongside an approximation of the actual reachable set by the black cloud of $1000$ sample successor states.

To compare these results with another set representation, we applied the zonotope-based method from CORA~\cite{althoff2015introduction} to the same reachability problem with a similar $3$-link network (taking the smooth approximation $\min(a,b)\approx-\log(e^{-a}+e^{-b})$ since the $\min$ operator cannot be used in CORA's symbolic implementation).
CORA solves the reachability problem by decomposing it into a sequence of intermediate reachability analysis between $t_0=0$ and $t_f=30$s.
At each step, CORA linearizes the nonlinear dynamics and if the considered set is too large, it is iteratively split to keep a low linearization error.
For these reasons and due to our large interval of initial states, CORA was unable to go further than the time instant $18.3$s after $5$ hours of computation\,\footnote{Reusing the solver parameters from CORA's vanDerPol example (\url{https://tumcps.github.io/CORA/}) apart from $timeStep=0.3$ and $maxError=[10;10;10]$.}.
It is plausible that the performance of CORA in this example could be improved with the choice of the internal solver parameters or by avoiding the use of the smoothed version of $\min$\,\footnote{The alternative (not yet attempted) would be to translate the system into a hybrid automaton. For $n_x=3$, this would require $16$ discrete locations and $80$ transitions.}.
TIRA, on the other hand, requires little to no parameter tuning from the user and it does not need the dynamics to be continuously differentiable.

\begin{figure}[htb]
\centering
\includegraphics[width=\columnwidth]{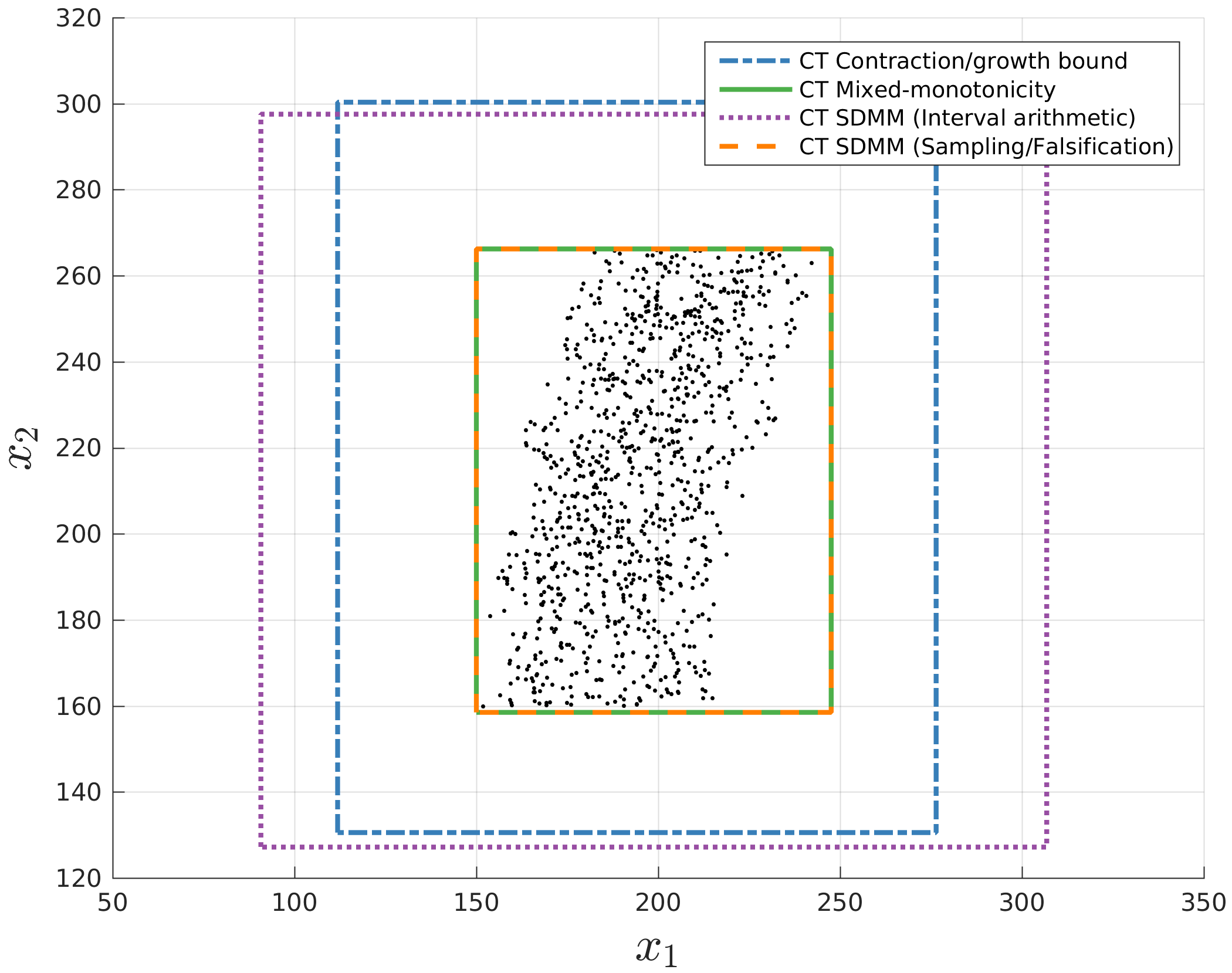}
\caption{Comparison of four over-approximations for the continuous-time model of a $3$-link traffic network representing a \emph{diverge} junction (colored intervals) with an approximation of its reachable set (black cloud of sample successors).\label{fig traffic OA}}
\end{figure}

To evaluate the scalability of the over-approximation methods, we now consider the $n_x$-link network with $n_x=99$ and interval of initial states $[\underline{x},\overline{x}]=[100,200]^{n_x}$.
The sampling and falsification submethod for sampled-data mixed-monotonicity in Section~\ref{sub sensitivity} is discarded from this test since it does not scale to this dimension because the number of samples should grow exponentially with $n_x$ to obtain a decent estimation of the sensitivity bounds.
The computation times for the other three methods are given in Table~\ref{tab time}.
Although the sampled-data mixed-monotonicity approach (with interval arithmetic submethod) appears to have a much worse scalability than the other two, it should be noted that most of its computation time corresponds to the interval arithmetic evaluating the Taylor series of a $n_x\times n_x$ interval matrix exponential ($332$ seconds), while the reachable set over-approximation itself (as in Proposition~\ref{prop sensi}) only takes $5.4$ seconds.

\begin{table}[!t]
\begin{center}
\begin{tabular}{|c||c|c|c|c|c|}
\hline
$n_x$ & C/GB & MM & SDMM (IA) & SDMM (S/F) & CORA\\
\hline
$3$ & $0.13$ & $0.050$ & $0.28$ & $7.0$ & ($>18000$) \\
$99$ & $0.37$ & $4.4$ & $338$ & - & - \\
\hline
\end{tabular}
\medskip
\caption{\label{tab time} Computation times (in seconds) for the over-approximation methods with $n_x=3$ and $n_x=99$, on a laptop with a $1.7$GHz CPU and $4$GB of RAM.}
\end{center}
\end{table}

\section{Conclusions and future work}
\label{sec conclusion}
In this paper, we introduced TIRA, a tool library gathering several methods to over-approximate the reachable set of continuous- and discrete-time systems by a multi-dimensional interval.
Compared to other tools and reachability approaches primarily aimed at the accuracy of over-approximations, TIRA shifts the focus towards the simplicity and scalability of interval methods, some of which providing tight interval over-approximations.
The main feature of TIRA's architecture is to be easily extensible by users who can add their own interval-based reachability methods.

The main directions for future development of TIRA include exploring interval reachability methods for hybrid systems and using existing interval arithmetic tools (see e.g., IBEX~\cite{chabert2009contractor}) to compute Jacobian bounds automatically without requiring user inputs.
Comparing the performances of TIRA to other interval-based tools such as DynIBEX~\cite{dynibex} and VNODE-LP~\cite{nedialkov2006vnode} will also be considered.


\bibliographystyle{ACM-Reference-Format}
\bibliography{2019_HSCC}

\ifproof
\appendix

\newpage
\section{Continuous-time monotonicity}
\label{appendix mono}
This section presents an over-approximation method which is only applicable to systems satisfying a monotonicity property defined below.
While this method is also available in TIRA, it is not presented in Section~\ref{sec reachability} of this paper because the continuous-time mixed-monotonicity approach in Section~\ref{sub mixed ct} encompasses it as a particular case.
Further comments on the comparison of these two methods are provided at the end of this section.

\paragraph{Requirements and limitations}
The monotonicity property for continuous-time systems with inputs (\ref{eq system ct}) is defined in~\cite{angeli2003monotone} and used for reachability analysis in~\cite{moor2002abstraction}.
A system (\ref{eq system ct}) is monotone if its Jacobian matrices $J_x(t,x,p)$ and $J_p(t,x,p)$ are sign-stable (apart from the diagonal of $J_x$) over the considered ranges of time, state and input and the sign structure satisfies the following assumption.
\begin{assum}
\label{assum mono}
Given an invariant state space $X\subseteq\R^{n_x}$, there exist $\varepsilon=[\varepsilon_1;\dots;\varepsilon_{n_x}]\in\{0,1\}^{n_x}$ and $\delta=[\delta_1;\dots;\delta_{n_p}]\in\{0,1\}^{n_p}$ such that for all $t\in[t_0,t_f]$, $x\in X$, $p\in[\underline{p},\overline{p}]$, $i,j\in\{1,\dots,n_x\}$, $j\neq i$ and $k\in\{1,\dots,n_p\}$ we have:
\begin{equation*}
(-1)^{\varepsilon_i+\varepsilon_j}\frac{\partial f_i(t,x,p)}{x_j}\geq0,\qquad
(-1)^{\varepsilon_i+\delta_k}\frac{\partial f_i(t,x,p)}{p_k}\geq0.
\end{equation*}
\end{assum}
Note that the user does not need to know in advance whether their system is monotone since TIRA automatically checks this sign structure by translating Assumption~\ref{assum mono} into a system of boolean equations and solving it in the 2-element Galois Field GF(2).

\paragraph{Method description}
An over-approximation of the reachable set is computed by evaluating the successor states of (\ref{eq system ct}) for only two pairs $(x,p)\in[\underline{x},\overline{x}]\times[\underline{p},\overline{p}]$ picked based on the boolean vectors $\varepsilon=[\varepsilon_1;\dots;\varepsilon_{n_x}]$ and $\delta=[\delta_1;\dots;\delta_{n_p}]$ satisfying Assumption~\ref{assum mono}.
\begin{proposition}
\label{prop mono}
Under Assumption~\ref{assum mono}, an over-approximation of the reachable set of (\ref{eq system ct}) in Problem~\ref{pb ct} is given by (using componentwise multiplications with $\varepsilon$ and $\delta$):
\begin{multline*}
R(t_f;t_0,[\underline{x},\overline{x}],[\underline{p},\overline{p}])\subseteq\\
[\Phi(t_f;t_0,\underline{x}(1_{n_x}-\varepsilon)+\overline{x}\varepsilon,\underline{p}(1_{n_p}-\delta)+\overline{p}\delta),\\
\Phi(t_f;t_0,\underline{x}\varepsilon+\overline{x}(1_{n_x}-\varepsilon),\underline{p}\delta+\overline{p}(1_{n_p}-\delta))].
\end{multline*}
\end{proposition}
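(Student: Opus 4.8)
The plan is to reduce the claim to the classical reachability result for cooperative systems (monotone with respect to the positive orthant) by a linear change of coordinates that flips the sign of the state and input components selected by $\varepsilon$ and $\delta$. Concretely, I would introduce the diagonal involutions $P=\mathrm{diag}((-1)^{\varepsilon_1},\dots,(-1)^{\varepsilon_{n_x}})$ and $Q=\mathrm{diag}((-1)^{\delta_1},\dots,(-1)^{\delta_{n_p}})$, and define transformed variables $y=Px$, $q=Qp$, so that the dynamics become $\dot y=\tilde f(t,y,q):=Pf(t,Py,Qq)$ (using $P^{-1}=P$ and $Q^{-1}=Q$). The entire purpose of Assumption~\ref{assum mono} is that this transformation turns the orthant sign-structure into the standard cooperativity structure.

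First I would verify this via the chain rule: for $i\neq j$ one gets $\frac{\partial \tilde f_i}{\partial y_j}=(-1)^{\varepsilon_i+\varepsilon_j}\frac{\partial f_i}{\partial x_j}\geq 0$ and $\frac{\partial \tilde f_i}{\partial q_k}=(-1)^{\varepsilon_i+\delta_k}\frac{\partial f_i}{\partial p_k}\geq 0$ directly from Assumption~\ref{assum mono}. Hence $\tilde f$ has nonnegative off-diagonal state Jacobian and nonnegative input Jacobian over the relevant ranges, i.e.\ the transformed system is cooperative in the state and monotone in the input. Note that the diagonal of $J_x$ never enters the Kamke--M\"uller condition, which is exactly why it is left unconstrained in Assumption~\ref{assum mono}.

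Next I would invoke the comparison theorem for cooperative systems: the flow of $\tilde f$ is order-preserving for the componentwise order, so if two initial states satisfy $y_0\leq y_0'$ and two piecewise continuous input functions satisfy $q(t)\leq q'(t)$ for all $t$, then $\Phi^{\tilde f}(t;y_0,q)\leq\Phi^{\tilde f}(t;y_0',q')$ for all $t\geq t_0$. Since $y_0=Px_0$ ranges over a box as $x_0$ ranges over $[\underline{x},\overline{x}]$, and $q(t)=Q\mathbf p(t)$ over a box, the extreme trajectories are obtained from the least and greatest corners of these transformed boxes with the least/greatest inputs. The key bookkeeping step is to check that these extreme corners correspond, back in the original coordinates, exactly to $\underline{x}(1_{n_x}-\varepsilon)+\overline{x}\varepsilon$ and $\underline{x}\varepsilon+\overline{x}(1_{n_x}-\varepsilon)$ (and analogously for the inputs via $\delta$), since minimizing $(-1)^{\varepsilon_i}x_i$ selects $\underline{x}_i$ when $\varepsilon_i=0$ and $\overline{x}_i$ when $\varepsilon_i=1$.

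Finally I would transform back to the original coordinates. The enclosure $\Phi^{\tilde f}(t;y_0^{\min},q^{\min})\leq P\Phi(t;t_0,x_0,\mathbf p)\leq\Phi^{\tilde f}(t;y_0^{\max},q^{\max})$ in the $y$-coordinates becomes, after left-multiplying by $P$ and using that the two extreme $y$-trajectories are $P$ times the two successor evaluations appearing in the statement, precisely the componentwise interval enclosure of Proposition~\ref{prop mono}. The main obstacle is the careful sign accounting in this last step: multiplication by $P$ reverses the inequality on every component with $\varepsilon_i=1$, so one must verify that the two listed endpoints end up correctly ordered for the given $\varepsilon$ and $\delta$ (equivalently, that the enclosure is the interval induced by the orthant order $\preceq_\varepsilon$). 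A secondary technical point is justifying the comparison theorem for merely piecewise continuous time-varying inputs $\mathbf p(\cdot)\in[\underline{p},\overline{p}]$, which follows from the nonnegativity of the input Jacobian of $\tilde f$ together with the standard M\"uller--Kamke argument applied on each continuity interval.
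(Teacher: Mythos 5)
Your reduction to a cooperative system through the involutions $P=\mathrm{diag}((-1)^{\varepsilon_1},\dots,(-1)^{\varepsilon_{n_x}})$ and $Q=\mathrm{diag}((-1)^{\delta_1},\dots,(-1)^{\delta_{n_p}})$ is sound and is essentially the standard route: the paper gives no self-contained proof of Proposition~\ref{prop mono} (it is imported from \cite{moor2002abstraction,angeli2003monotone}), but the same extremal-corner bookkeeping appears in its proof of Proposition~\ref{prop ct mono}. The chain-rule verification of the Kamke--M\"uller condition for $\tilde f$, the identification of the $\preceq_\varepsilon$-least and $\preceq_\varepsilon$-greatest corners of $[\underline{x},\overline{x}]$ with $\underline{x}(1_{n_x}-\varepsilon)+\overline{x}\varepsilon$ and $\underline{x}\varepsilon+\overline{x}(1_{n_x}-\varepsilon)$, and the conjugacy $\Phi^{\tilde f}(t;Px_0,Q\mathbf{p})=P\Phi(t;t_0,x_0,\mathbf{p})$ are all correct.

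The gap is exactly the step you defer as ``careful sign accounting,'' and it does not close in the form you expect. Write $x^{\bot}=\underline{x}(1_{n_x}-\varepsilon)+\overline{x}\varepsilon$, $x^{\top}=\underline{x}\varepsilon+\overline{x}(1_{n_x}-\varepsilon)$ and similarly $p^{\bot},p^{\top}$. Multiplying the cooperative enclosure on the left by $P$ reverses the inequality on every component with $\varepsilon_i=1$, so what your argument actually yields is
\begin{align*}
\Phi_i(t_f;t_0,x^{\bot},p^{\bot})\leq\Phi_i(t_f;t_0,x_0,\mathbf{p})\leq\Phi_i(t_f;t_0,x^{\top},p^{\top})&\quad\text{if }\varepsilon_i=0,\\
\Phi_i(t_f;t_0,x^{\top},p^{\top})\leq\Phi_i(t_f;t_0,x_0,\mathbf{p})\leq\Phi_i(t_f;t_0,x^{\bot},p^{\bot})&\quad\text{if }\varepsilon_i=1.
\end{align*}
Hence, under the componentwise interval convention declared in Section~\ref{sec problem}, the enclosing interval must \emph{interleave} the components of the two successors according to $\varepsilon$ --- precisely the structure $[\Phi^+(\cdot);\Phi^-(\cdot)]$ appearing in the paper's proof of Proposition~\ref{prop ct mono} --- rather than take $\Phi(t_f;t_0,x^{\bot},p^{\bot})$ wholesale as the lower endpoint. (A degenerate sanity check: $\dot x=-x$ with $n_x=1$ satisfies Assumption~\ref{assum mono} with $\varepsilon_1=1$, and the two listed endpoints then come out in the wrong order.) To finish, you must either prove the interleaved version or state explicitly that the bracket in Proposition~\ref{prop mono} is to be read with respect to the orthant order $\preceq_\varepsilon$; the back-transformation as you planned it would otherwise assert reversed endpoints on every coordinate with $\varepsilon_i=1$.
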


\paragraph{Remarks}
While Assumption~\ref{assum mono} is quite restrictive, whenever it is satisfied the resulting interval in Proposition~\ref{prop mono} is guaranteed to be a tight over-approximation of the reachable set.
As mentioned in Proposition~\ref{prop ct mono} and proved below in Appendix~\ref{appendix sub mono}, applying the continuous-time mixed-monotonicity approach in Proposition~\ref{prop ct mixed mono} to a monotone system satisfying Assumption~\ref{assum mono} will result in the same tight interval over-approximation as in Proposition~\ref{prop mono}.
The main differences between these two results is that the monotonicity-specific result in Proposition~\ref{prop mono} has a constant complexity (we always only evaluate $\Phi$ for two state-input pairs in $[\underline{x},\overline{x}]\times[\underline{p},\overline{p}]$), while the complexity of the more general result in Proposition~\ref{prop ct mixed mono} is linear in the state dimension $n_x$ ($2n_x$ evaluations of $\Phi$ are required).
On the other hand, Proposition~\ref{prop ct mixed mono} does not need to know whether Assumption~\ref{assum mono} is satisfied to obtain this result, while Proposition~\ref{prop mono} first requires checking Assumption~\ref{assum mono} through the provided function in TIRA which can be time consuming for large systems.

\section{Proofs of Section~\ref{sub mixed ct}}
\label{appendix proofs}
In this section, $\R_+$ and $\R_-$ are the sets of non-negative and non-positive real numbers, respectively.

\subsection{Proposition~\ref{prop ct mixed mono}}

\begin{proof}[Proof of Proposition~\ref{prop ct mixed mono}]
From the definitions of functions $g$ and $h$ in (\ref{eq decomposition function})-(\ref{eq duplicated system}), we have for all $i,j\in\{1,\dots,n_x\}$, $j\neq i$ and $k\in\{1,\dots,n_p\}$:
\begin{align*}
\frac{\partial h_i(t,x,p,\hat x,\hat p)}{\partial x_j}=\frac{\partial f_i(t,\xi^i,\pi^i)}{\partial x_j}+\alpha^i_j\geq0\\
\frac{\partial h_i(t,x,p,\hat x,\hat p)}{\partial \hat x_j}=\frac{\partial f_i(t,\xi^i,\pi^i)}{\partial \hat x_j}-\alpha^i_j\leq0\\
\frac{\partial h_i(t,x,p,\hat x,\hat p)}{\partial \hat x_i}=\frac{\partial f_i(t,\xi^i,\pi^i)}{\partial \hat x_i}-\alpha^i_i=0
\end{align*}
Similarly, we obtain $\frac{\partial h_{n_x+i}}{\partial x_i}=0$, $\frac{\partial h_{n_x+i}}{\partial x_j}\leq0$, $\frac{\partial h_{n_x+i}}{\partial \hat x_j}\geq0$, $\frac{\partial h_i}{\partial p_k}\geq0$, $\frac{\partial h_i}{\partial \hat p_k}\leq0$, $\frac{\partial h_{n_x+i}}{\partial p_k}\leq0$ and $\frac{\partial h_{n_x+i}}{\partial \hat p_k}\geq0$.
This implies that system (\ref{eq duplicated system}) is monotone with respect to the orthants $\R^{n_x}_+\times\R^{n_x}_-$ and $\R^{n_p}_+\times\R^{n_p}_-$.
Then from~\cite{angeli2003monotone}, for all $x\in[\underline{x},\overline{x}]$ and $\mathbf{p}:[t_0,t_f]\rightarrow[\underline{p},\overline{p}]$, we have
$$
\Phi^h(t_f;t_0,\underline x,\underline p,\overline{x},\overline{p})\preceq_x
\Phi^h(t_f;t_0,x,\mathbf{p},x,\mathbf{p})\preceq_x
\Phi^h(t_f;t_0,\overline x,\overline p,\underline{x},\underline{p})
$$
where $\preceq_x$ is the partial order defined by the orthant $\R^{n_x}_+\times\R^{n_x}_-$, (i.e.\ for all $x,\hat x,y,\hat y\in\R^{n_x}$, 
$\begin{pmatrix}x \\ \hat x\end{pmatrix}
\preceq_x
\begin{pmatrix}y \\ \hat y\end{pmatrix}
\Leftrightarrow
\begin{cases}
x\leq y,\\
\hat x\geq \hat y,
\end{cases}$
where $\leq$ and $\geq$ are the componentwise inequalities on $\R^{n_x}$).
From (\ref{eq decomposition function}), $f$ is embedded in the diagonal of $g$ (i.e.\ $g(t,x,p,x,p)=f(t,x,p)$), which implies that $\Phi^h(t_f;t_0,x,\mathbf{p},x,\mathbf{p})=\begin{pmatrix}\Phi(t_f;t_0,x,\mathbf{p})\\\Phi(t_f;t_0,x,\mathbf{p})\end{pmatrix}$.
Finally, the symmetry of system (\ref{eq duplicated system}) implies that $\Phi^h_{n_x+1\dots 2n_x}(t_f;t_0,\overline x,\overline p,\underline{x},\underline{p})=\Phi^h_{1\dots n_x}(t_f;t_0,\underline x,\underline p,\overline{x},\overline{p})$, which results in the proposition statement.
\end{proof}

\subsection{Proposition~\ref{prop ct mono}}
\label{appendix sub mono}
\begin{proof}[Proof of Proposition~\ref{prop ct mono}]
%
%
We start from a system (\ref{eq system ct}) satisfying the monotonicity condition in Assumption~\ref{assum mono}.
Without loss of generality, we assume that the states in $x\in\R^{n_x}$ are ordered as $x=[x^+;x^-]$ with $x^+\in\R^{n_x^+}$, $x^-\in\R^{n_x^-}$, $n_x^++n_x^-=n_x$ and such that $\varepsilon=[0_{n_x^+};1_{n_x^-}]$.
We use similar notations $p^+\in\R^{n_p^+}$, $p^-\in\R^{n_p^-}$ and $\delta=[0_{n_p^+};1_{n_p^-}]$ for the input vector $p\in\R^{n_p}$.
We similarly introduce $f^+$, $f^-$, $\Phi^+$, $\Phi^-$ for the decomposition of the vector field $f$ and trajectory function $\Phi$ respectively, into their $n_x^+$ first and $n_x^-$ last components.

If we now apply the result in Proposition~\ref{prop ct mixed mono} to this monotone system, then for all $i\in\{1,\dots,n_x\}$ we have $\alpha^i=0_{n_x}$, $\beta^i=0_{n_p}$ and 
\begin{equation*}
(\xi^i,\pi^i)=
\begin{cases}
(x(1_{n_x}-\varepsilon)+\hat x\varepsilon,p(1_{n_p}-\delta)+\hat p\delta)\text{ if }\varepsilon_i=0,\\
(x\varepsilon+\hat x(1_{n_x}-\varepsilon),p\delta+\hat p(1_{n_p}-\delta))\text{ if }\varepsilon_i=1,
\end{cases}\\
\end{equation*}
using componentwise multiplications.
As a result, system (\ref{eq duplicated system}) becomes:
\begin{equation}
\label{eq duplicated mono}
\begin{pmatrix}\dot x^+\\\dot x^-\\\dot{\hat x}^+\\\dot{\hat x}^-\end{pmatrix}
=h(t,x,p,\hat x,\hat p)=
\begin{pmatrix}
f^+(t,[x^+;\hat x^-],[p^+;\hat p^-])\\
f^-(t,[\hat x^+;x^-],[\hat p^+;p^-])\\
f^+(t,[\hat x^+;x^-],[\hat p^+;p^-])\\
f^-(t,[x^+;\hat x^-],[p^+;\hat p^-])
\end{pmatrix}.
\end{equation}
Since (\ref{eq duplicated mono}) actually contains two decoupled copies of system (\ref{eq system ct}):
$$
\begin{pmatrix}\dot x^+\\\dot{\hat x}^-\end{pmatrix}
=f(t,[x^+;\hat x^-],[p^+;\hat p^-]),\quad
\begin{pmatrix}\dot{\hat x}^+\\\dot x^-\end{pmatrix}
=f(t,[\hat x^+;x^-],[\hat p^+;p^-]),
$$
it implies that any successor of (\ref{eq duplicated mono}) can be expressed as two successors of (\ref{eq system ct}).
In particular, for the quadruple of initial states and inputs $(\underline{x},\underline{p},\overline{x},\overline{p})$ used in Proposition~\ref{prop ct mixed mono}, we have:
$$
\Phi^h(t_f;t_0,\underline{x},\underline{p},\overline{x},\overline{p})=
\begin{pmatrix}
\Phi^+(t_f;t_0,[\underline{x}^+;\overline{x}^-],[\underline{p}^+;\overline{p}^-])\\
\Phi^-(t_f;t_0,[\overline{x}^+;\underline{x}^-],[\overline{p}^+;\underline{p}^-])\\
\Phi^+(t_f;t_0,[\overline{x}^+;\underline{x}^-],[\overline{p}^+;\underline{p}^-])\\
\Phi^-(t_f;t_0,[\underline{x}^+;\overline{x}^-],[\underline{p}^+;\overline{p}^-])
\end{pmatrix}.
$$
Since $\begin{pmatrix}\underline{x}^+\\\overline{x}^-\end{pmatrix},\begin{pmatrix}\overline{x}^+\\\underline{x}^-\end{pmatrix}\in[\underline{x},\overline{x}]$ and $\begin{pmatrix}\underline{p}^+\\\overline{p}^-\end{pmatrix},\begin{pmatrix}\overline{p}^+\\\underline{p}^-\end{pmatrix}\in[\underline{p},\overline{p}]$, we know that $\Phi(t_f;t_0,[\underline{x}^+;\overline{x}^-],[\underline{p}^+;\overline{p}^-])$ and $\Phi(t_f;t_0,[\overline{x}^+;\underline{x}^-],[\overline{p}^+;\underline{p}^-])$ belong to the actual reachable set $R(t_f;t_0,[\underline{x},\overline{x}],[\underline{p},\overline{p}])$ of (\ref{eq system ct}).
As a result, the interval defined from the $2n_x$ components of $\Phi^h(t_f;t_0,\underline{x},\underline{p},\overline{x},\overline{p})$ in Proposition~\ref{prop ct mixed mono} is necessarily a tight interval over-approximation of the reachable set.

Since a tight interval over-approximation of a set is uniquely defined and the reachability method defined for monotone systems in Proposition~\ref{prop mono} is also known to provide a tight interval over-approximation of the reachable set, we can conclude that both methods provide the same results.
\end{proof}
\fi

\end{document}